\documentclass[11pt]{article}
\usepackage{amsmath,amssymb,amsthm}
\usepackage{cite}
\usepackage{geometry}
\usepackage{booktabs}
\usepackage{array}
\usepackage{enumitem}
\usepackage{hyperref}
\hypersetup{
  hidelinks,
  pdftitle={The residual 288 of the E8 x omega-E8 program as adjoint-lineage scaffolding labels},
  pdfauthor={Tejinder P. Singh}
}
\newtheorem{theorem}{Theorem}
\newtheorem{proposition}{Proposition}
\newtheorem{lemma}{Lemma}

\newtheorem{remark}{Remark}

\newcommand{\st}{\mathrm{st}}
\newcommand{\gen}{\mathrm{gen}}
\newcommand{\Tr}{\operatorname{Tr}}
\newcommand{\Lff}{\mathcal{L}_{FF}}
\newcommand{\Cl}{\mathrm{Cl}}

\title{The Residual $288$ of the $E_8\times\omega E_8$ Program\\
as Adjoint-Lineage Scaffolding Labels:\\
an Ontology, and the Status of the Bifermionic Lagrangian}
\author{Tejinder P. Singh\\
Tata Institute of Fundamental Research, Homi Bhabha Road, Mumbai 400005, India\\
\texttt{tpsingh@tifr.res.in}}
\date{June 10, 2026}

\begin{document}
\maketitle

\begin{abstract}
\noindent In the $E_8\times\omega E_8$ octonionic unification program, each $E_8$ branches as $SU(3)_{\st}\times E_6$, supplying one geometric/spacetime $SU(3)_{\st}$ per branch, while the split-complex unit $\omega$ grades the two (visible / pre-gravitational) branches. The dynamical matter and gauge content is carried by the resulting $E_6\times E_6$, and the chiral fermions are realized as $\Cl(6)$ minimal-ideal spinors of the split bioctonions, not as $E_8$ representation components --- which is also why the chiral sector lies outside the Distler--Garibaldi no-go theorem. Comparing the two-branch adjoint reservoir ($496$ Lie-algebraic labels) with the structures matched in the Generalized Trace Dynamics (GTD) Lagrangian ($208$) leaves a residual $288$. We argue that this $288$ is an adjoint-lineage representation-label ledger --- bookkeeping for the scaffolding --- and not a particle spectrum, and that it should not be sourced from the matter Lagrangian. The bifermionic seed $\Lff\propto\Tr(\beta_1\Psi_F^\dagger\beta_2\Psi_F)$ is Hermitian and its $E_6$-covariant charge-\emph{difference} channels are classified by $\overline{27}\otimes27=\mathbf 1\oplus\mathbf{78}\oplus\mathbf{650}$: in each branch the $\mathbf{78}$ supplies the gauge currents and a composite electroweak Higgs (the Standard-Model Higgs as the $SU(3)_R$-coset doublet of the right, pre-gravitational branch, giving mass to the Standard-Model fermions; a second scalar $H_{\rm ch}$ as the $SU(3)_L$-coset doublet of the left, visible branch), while the $E_6$-singlet $\mathbf 1$ is electroweak-inert. Of the residual labels, the spacetime-singlet sector $A$ is a charge-\emph{sum} (Majorana) channel of $27\otimes27$ and is absent from the Hermitian seed, while the $252$ spacetime-charged labels cannot be matter bilinears in any reservoir, because the $\Cl(6)$ matter is $SU(3)_{\st}$-singlet and no matter bilinear can carry the geometric index; the lone spacetime-singlet sector is read as a label by lineage uniformity. The size of $E_8\times\omega E_8$ is therefore the dimension of a label ledger, not a count of particles --- directly answering the objection that the construction is ``too large.'' Taking the composite Higgs to have Planck-scale compositeness (so that compositeness signatures are suppressed by $v^2/M_{\rm Pl}^2\sim10^{-34}$, the electroweak scale $v$ being set separately at gravi-weak symmetry breaking), the ontology fixes the framework's matter and scalar \emph{content}: beyond the Standard Model it contains only right-handed (sterile) neutrinos, one per generation, and a second composite Higgs scalar (whose masses and couplings we do not compute), with no room for superpartners, a fourth generation, or light exotic colored or charged states. We are explicit that these exclusions coincide with the null results of the Standard Model and so do not by themselves \emph{discriminate} the framework experimentally; its distinctive --- though not yet quantitative --- content is the sterile neutrinos and the second scalar. The electroweak hierarchy $v/M_{\rm Pl}\sim10^{-17}$ is relocated to the gravi-weak breaking scale, not solved.
\end{abstract}

\tableofcontents

\section{Introduction: what to do with the $288$}\label{sec:intro}

The $E_8\times\omega E_8$ unification program of Kaushik, Vaibhav and Singh (hereafter KVS)~\cite{KVS}, together with its Generalized Trace Dynamics (GTD) and exceptional-Jordan-algebra developments~\cite{GTDemergence,MassRatios,STM,LRbiquaternion}, organizes the Standard Model and a right-handed ``pre-gravitational'' counterpart inside two copies of $E_8$. The algebraic arena is the \emph{split bioctonions} $\mathbb B=\mathbb O\oplus\omega\tilde{\mathbb O}$, the split-complex doubling of the octonions by the unit $\omega$ ($\omega^2=+1$, $\omega\neq\pm1$): their Clifford algebra $\Cl(7)=\Cl(6)_L\oplus\Cl(6)_R$ supplies opposite-parity left- and right-handed chiral fermions, while $\omega$ grades the two (visible / pre-gravitational) branches~\cite{STM,LRbiquaternion}. This split-bioctonionic structure is the scaffolding on which the two $E_8$ factors are built, with each $E_8$ supplying one geometric $SU(3)_{\st}$ (Sec.~\ref{sec:kvs}). A recurring bookkeeping fact in this program is that, after matching the structures that appear in the GTD Lagrangian, a block of $288$ Lie-algebraic labels remains unmatched.

The use of octonions and division algebras to encode Standard-Model structure has a long lineage: from the octonionic quark models of G\"unaydin and G\"ursey~\cite{GunaydinGursey1973} and the $E_6$ unification of G\"ursey, Ramond and Sikivie~\cite{GRS1976}, through the division-algebraic syntheses of Dixon~\cite{Dixon1994} and G\"ursey--Tze~\cite{GurseyTze} (see Baez~\cite{Baez2002} for the octonions), to the modern $\Cl(6)$ single-generation constructions of Furey~\cite{Furey,FureyThesis,Furey2018PLB,Furey2018EPJC} (with a related three-generation Jordan-algebraic construction in~\cite{FureyZ2}), Stoica~\cite{Stoica2018} and Gillard--Gresnigt~\cite{GillardGresnigt2019}, and the exceptional-Jordan-algebra approaches to the Standard Model of Dray--Manogue~\cite{DrayManogue1999,ManogueDray2010}, Dubois-Violette and Todorov~\cite{DuboisViolette2016,TodorovDuboisViolette2018}, and Boyle~\cite{Boyle2020}. The present program~\cite{KVS,MassRatios} works within this tradition; for an $E_8$-based unification of a different, non-chiral character, see Lisi~\cite{Lisi2007} and the no-go analysis of Distler and Garibaldi~\cite{DistlerGaribaldi}.

\subsection{The fundamental Lagrangian, the $208/288$ split, and a temptation}\label{sec:lagrangian}
The dynamics is the single-atom GTD Lagrangian~\cite{GTDemergence} --- a generalized trace dynamics in the sense of Adler~\cite{Adler2004,AdlerMillard1996}, combined with a spectral-action principle~\cite{ChamseddineConnes1997} --- which with ${\cal D}_B:=\dot Q_B$, $\Psi_F:=\dot Q_F$ and $\eta:=L_P^2/L^2$ opens into three parts,
\begin{equation}\label{eq:Lgtd}
{\cal L}_{\rm GTD}=\frac{L_P^2}{2L^2}\,\Tr\!\big[({\cal D}_B^\dagger+\eta\beta_1\Psi_F^\dagger)({\cal D}_B+\eta\beta_2\Psi_F)\big]
={\cal L}_{BB}+{\cal L}_{BF}+\Lff ,
\end{equation}
with ${\cal L}_{BB}\propto\Tr({\cal D}_B^\dagger{\cal D}_B)$ the bosonic/pre-gravitational precursor, ${\cal L}_{BF}$ the boson--fermion pairing that, after localization, carries the fermionic sector, and $\Lff\propto\Tr(\beta_1\Psi_F^\dagger\beta_2\Psi_F)$ the bifermionic seed. The $496$ is a count of Lie-algebra generators (labels), and the matching of Ref.~\cite{GTDemergence} assigns $208$ of them to physical roles: the gauge/pre-gravitational labels and the electroweak-breaking (Higgs) labels --- whose dynamics resides in ${\cal L}_{BB}$, the propagating Higgs acquiring its kinetic term there as for any auxiliary-field-bosonized condensate --- together with the fermionic labels carried by ${\cal L}_{BF}$. These matched labels total $208$. The bifermionic seed $\Lff$ supplies, dynamically, the Higgs \emph{order parameter} (its electroweak-doublet channel, Sec.~\ref{sec:lff}), whose condensation sets the vacuum expectation value; the propagating Higgs label is already counted in the $208$, so nothing is double-counted. The arithmetic remainder,
\begin{equation}\label{eq:288}
496-208=288 ,
\end{equation}
is the block of labels left unmatched, and is the subject of this paper.

Because the remainder is what is ``left over'' from the bifermionic term, there is an immediate and seductive temptation: to read the $288$ as composite bifermionic channels of $\Lff$. The temptation is sharpened by a numerical coincidence. Each branch carries $24$ Weyl fermions (eight states per generation, three generations), so the branchwise fermion bilinears number
\begin{equation}\label{eq:576}
24\times24=576=2\times288 ,
\end{equation}
exactly twice the residual count --- as if the $288$ together with its conjugate were precisely the bilinear channels of the seed. We argue that this reading is the wrong way to go, and that the coincidence~\eqref{eq:576} does not survive scrutiny (Sec.~\ref{sec:disjoint}): the bifermionic seed is a charge-\emph{difference} object $\overline{27}\otimes27$, while the residual labels are not of that form --- the spacetime-singlet sector is a charge-\emph{sum} channel of $27\otimes27$, and the spacetime-charged sectors cannot be matter bilinears at all once the matter is taken to be $SU(3)_{\st}$-singlet. We record~\eqref{eq:576} here only as the tempting false start, resolved later; it is not a motivation we adopt.

This paper argues that the temptation should be resisted, and that resisting it is the clean resolution. The central claims are:
\begin{enumerate}[label=(\roman*)]
\item The residual $288$ is an \emph{adjoint-lineage representation-label ledger}: it is read off the \emph{linear} branching of the two $E_8$ adjoints (the KVS decomposition, Sec.~\ref{sec:kvs}), where every label --- including the geometric $SU(3)_{\st}$ index --- is well defined. It is not a particle spectrum and not a count of propagating fields.
\item The bifermionic GTD seed $\Lff\propto\Tr(\beta_1\Psi_F^\dagger\beta_2\Psi_F)$ is a Hermitian bilinear of the matter, hence lives in $\overline{27}\otimes27=\mathbf{1}\oplus\mathbf{78}\oplus\mathbf{650}$. Its \emph{representation} content is completely known (Sec.~\ref{sec:lff}): each branch's $\mathbf{78}$ furnishes the gauge currents and a composite electroweak Higgs (the Standard-Model Higgs from the right-branch $SU(3)_R$ coset, the second scalar $H_{\rm ch}$ from the left-branch $SU(3)_L$ coset); the $E_6$-singlet $\mathbf{1}$ is electroweak-inert; and the $\mathbf{650}$ is a tower of higher composite operators (whether any of these higher channels bind is a separate dynamical question, addressed in Sec.~\ref{sec:scale}).
\item The residual $288$ should not be read as composite channels of the seed. The bare seed contains no charge-sum or fundamental-generation channel, hence not $A$, $B$, or $C$ (Theorem~\ref{thm:disjoint}); and once the $\Cl(6)$ spinor ontology is adopted, the $252$ $SU(3)_{\st}$-charged labels $B$--$E$ are excluded from \emph{any} reservoir because the matter is $SU(3)_{\st}$-singlet (Proposition~\ref{prop:doubled}). The only enlargement that could host charge-sum channels, a Nambu doubling, reaches at most the single $SU(3)_{\st}$-singlet sector $A$ ($36$ of the $288$); for $A$ the case for a label reading is methodological, not a no-go, and we say so (Remark~\ref{rem:method}).
\end{enumerate}
The accounted, chiral, anomaly-free matter is carried not by either of these adjoint-lineage objects but by $\Cl(6)$ minimal-ideal spinors (Sec.~\ref{sec:cl6}), exactly as in the mass-ratio construction~\cite{MassRatios}. The $E_8\times\omega E_8$ structure is scaffolding; $E_6\times E_6$ carries matter and gauge fields; the chiral fermions are spinors. The residual $288$ lives in the adjoint lineage and the matter lives in the spinor lineage --- and it is precisely this lineage separation that licenses, and indeed forces, the reading of the $288$ as labels rather than as new physics.

We are explicit about what is \emph{derived} here, what is \emph{imported} from the cited program, and what is \emph{assumed or open}. The branching arithmetic, the $\overline{27}\otimes27$ decomposition, the bare-seed theorem and the single-branch obstruction for $B$--$E$ are derived. The $\Cl(6)$ chiral matter, its three generations, and its anomaly freedom are imported~\cite{MassRatios,LRbiquaternion}. The localization of the electroweak-doublet channel to a physical Higgs, the gravi-weak origin of the electroweak scale~\cite{WSI}, and the Planck-scale value of the compositeness scale are taken as inputs; we make no claim to derive the electroweak hierarchy.

\section{The $E_8\times\omega E_8$ scaffolding and the KVS branching}\label{sec:kvs}

\subsection{Two $E_8$ branches and the geometric $SU(3)_{\st}$}
The reservoir is two copies of $E_8$, graded by the split-complex unit $\omega$\footnote{The notation $E_8\times\omega E_8$ denotes two $E_8$ branches graded by the split-complex unit $\omega$ ($\mathbb R[\omega]\cong\mathbb R\oplus\mathbb R$), i.e.\ the real algebra $\mathfrak e_8\oplus\omega\,\mathfrak e_8\cong\mathfrak e_8\oplus\mathfrak e_8$, not a literal direct-product Lie group; we retain the program's notation $E_8\times\omega E_8$ throughout.} ($\omega^2=+1$, $\omega\neq\pm1$), which generates $\mathbb{R}[\omega]\cong\mathbb{R}\oplus\mathbb{R}$ and distinguishes the visible branch from the pre-gravitational branch~\cite{KVS,STM}. The total Lie-algebraic dimension is
\begin{equation}
\dim(E_8\times\omega E_8)=248+248=496 .
\end{equation}
Each $E_8$ branches under its maximal subgroup $SU(3)_{\st}\times E_6$ as
\begin{equation}\label{eq:e8branch}
E_8\supset SU(3)_{\st}\times E_6,\qquad
\mathbf{248}=(\mathbf 8,\mathbf 1)\oplus(\mathbf 1,\mathbf{78})\oplus(\mathbf 3,\mathbf{27})\oplus(\bar{\mathbf 3},\overline{\mathbf{27}}),
\end{equation}
with $8+78+3\cdot27+3\cdot27=248$. In the program's reading~\cite{STM}, $SU(3)_{\st}$ is interpreted as a geometric/spacetime rotation symmetry acting on octonionic coordinates: the role of $E_8\times\omega E_8$ is to supply the two geometric $SU(3)_{\st}$ factors (one per branch) and the branch grading $\omega$. The dynamical matter and gauge content is then carried by the residual $E_6\times E_6$. We stress that the triplet label $\mathbf 3$ multiplying $\mathbf{27}$ in~\eqref{eq:e8branch} is the geometric $SU(3)_{\st}$, \emph{not} a family index.

\subsection{Nonstandard trinification of $E_6$ and the family $SU(3)$}
The two $E_6$ factors trinify in the \emph{nonstandard} way used throughout the program, but \emph{differently} in the two branches. For the \emph{visible} (left) $E_6$,
\begin{equation}\label{eq:e6trin}
E_6\supset SU(3)_{\gen}\times SU(3)_c\times SU(3)_{L},\qquad
\mathbf{27}=(\bar{\mathbf 3},\mathbf 1,\mathbf 3)\oplus(\mathbf 1,\mathbf 3,\bar{\mathbf 3})\oplus(\mathbf 3,\bar{\mathbf 3},\mathbf 1),
\end{equation}
in which one of the three $SU(3)$ factors is interpreted as a \emph{generation/flavor} symmetry $SU(3)_{\gen}$, the second as color $SU(3)_c$, and the third as the (left) electroweak $SU(3)_L\supset SU(2)_L\times U(1)$. This differs from conventional GUT trinification and the original $E_6$ unification~\cite{GRS1976,AchimanStech1978}, where the three $SU(3)$s are color and two electroweak factors and a single family is embedded in one $\mathbf{27}$; here the family structure is internal to $E_6$ via $SU(3)_{\gen}$. The \emph{pre-gravitational} (right) $E_6$ trinifies analogously but into $SU(3)_{\gen}\times SU(3)_{\mathrm{grav}}\times SU(3)_R$, with $SU(3)_{\mathrm{grav}}$ a gravi-color, $SU(3)_R\supset SU(2)_R\times U(1)_g$ the right-handed (gravi-weak) counterpart of $SU(3)_L$, and its own generation $SU(3)_{\gen}$ (on left--right symmetry generally see~\cite{PatiSalam1974,MohapatraSenjanovic1980}). The standard model hypercharge is \emph{not} obtained from $SU(3)_L$ alone: it is a cross-sector left--right Cartan combination
\begin{equation}\label{eq:hypercharge}
Y=\alpha\,T^8_L+\beta\,T^8_R+\gamma\,T^3_R ,
\end{equation}
a fixed diagonal generator drawn from both the left and right electroweak Cartans, as developed in the mass-ratio construction~\cite{MassRatios} (whose appendix treats the status of $Y$ in detail, and shows that the representation-dependent ``$1/(2N)$'' shorthand of~\cite{KVS} is an eigenvalue mnemonic, not a gauge generator). It is this cross-sector Cartan structure, together with triality breaking, that allows a single comparison $SU(3)$ to organize three generations~\cite{MassRatios}; the three-generation question is therefore a feature of the imported mass-ratio construction, not an open problem of the present paper.

\subsection{The matched / residual partition (the KVS branching)}\label{sec:partition}
Branching one $E_8$ all the way to $SU(3)_{\st}\times SU(3)_{\gen}\times SU(3)_c\times SU(2)_L\times U(1)$ and tagging each irrep by its physical role yields the KVS partition~\cite{KVS} (their Eq.~(11)), summarized in Table~\ref{tab:partition}.

\begin{table}[h]
\centering
\begin{tabular}{l c l}
\toprule
irrep(s) $(SU(3)_{\st},SU(3)_{\gen},SU(3)_c,SU(2)_L)$ & dim & role \\
\midrule
\multicolumn{3}{l}{\emph{Matched --- gauge and electroweak-breaking sector ($32$)}}\\
$(8,1,1)\oplus(1,8,1)\oplus(1,1,8)$ & $24$ & $SU(3)_{\st}\times SU(3)_{\gen}\times SU(3)_c$ gauge \\
$(1,1,1,3)\oplus(1,1,1,1)$ & $4$ & $SU(2)_L\times U(1)$ gauge \\
$(1,1,1,2)\oplus(1,1,1,2)$ & $4$ & electroweak-breaking (Higgs) sector labels \\
\midrule
\multicolumn{3}{l}{\emph{Matched --- three-generation weak doublets ($72$)}}\\
$(1,3,3,2)\oplus\mathrm{c.c.}$ & $36$ & quark-doublet labels ($Q_L$) \\
$(3,3,1,2)\oplus\mathrm{c.c.}$ & $36$ & lepton-doublet labels ($L_L$) \\
\midrule
\multicolumn{3}{l}{\emph{Residual --- the labels $A$--$E$ and conjugates ($144$)}}\\
$A=(1,3,3,1)\oplus\bar A$ & $18$ & $SU(3)_{\st}$ singlet \\
$B=(3,\bar3,3,1)\oplus\bar B$ & $54$ & \\
$C=(\bar3,\bar3,1,1)\oplus\bar C$ & $18$ & \\
$D=(\bar3,1,3,2)\oplus\bar D$ & $36$ & weak doublet, generation singlet \\
$E=(3,1,\bar3,1)\oplus\bar E$ & $18$ & \\
\midrule
\multicolumn{2}{l}{Matched $104$, residual $144$, total} & $248$ \\
\bottomrule
\end{tabular}
\caption{The KVS partition of one $E_8$ adjoint, a count of Lie-algebra generators (labels). Two branches give $208$ matched and $288$ residual. The $(1,1,1,2)$ ``Higgs sector'' entries are the adjoint \emph{labels} aligned with electroweak breaking and are counted in the matched $208$; their dynamical realization --- a propagating Higgs whose kinetic term arises in ${\cal L}_{BB}$ and whose vacuum expectation value comes from the electroweak-doublet channel of $\Lff$ (Sec.~\ref{sec:lff}) --- is a statement about Lagrangian terms, not an additional entry in this label ledger. The distinction between a label and a propagating field is the theme of this paper.}\label{tab:partition}
\end{table}

Two branches give $208$ matched and $288$ residual,
\begin{equation}
2\times104=208,\qquad 2\times144=288,\qquad 208+288=496 .
\end{equation}
The entire $496$ is a ledger of adjoint-lineage \emph{labels}; ``matched'' means a label that is \emph{aligned with}, or resembles, a physical structure (a gauge boson, the electroweak-breaking sector, or a three-generation weak doublet), not an identification of the label with a propagating field. The physical fields themselves are the separate $\Cl(6)$ minimal-ideal spinors of Sec.~\ref{sec:cl6}. This reading is what keeps the table consistent: the matched lepton-doublet labels $L_L=(\mathbf 3,\mathbf 3,\mathbf 1,\mathbf 2)$ carry an $SU(3)_{\st}$ index just as the residual sectors $B,E$ do, so if ``matched'' meant ``is lepton matter'' the table would already contradict the $SU(3)_{\st}$-singlet character of the physical leptons. It does not, precisely because every row --- matched or residual --- is a label, and the physical leptons live elsewhere (the spinor ledger). The matched/residual split itself follows a single criterion --- a label is ``matched'' when it is simultaneously a generation triplet and a weak doublet --- with caveats we record honestly: sector $D$ is a weak doublet but a generation singlet, and is residual by the same rule; and sector $A$ does not descend from the $(\mathbf 3,\mathbf{27})\oplus(\bar{\mathbf 3},\overline{\mathbf{27}})$ at all but from the $SU(3)_{\st}$-singlet $(\mathbf 1,\mathbf{78})$. We do not claim the rule is forced by first principles; we take Table~\ref{tab:partition} as the KVS bookkeeping (an input) and use it as the definition of the residual labels.

\subsection{Where the geometric index lives, read linearly}\label{sec:linear}
It is important that Table~\ref{tab:partition} is a \emph{linear} branching of single $E_8$ adjoints. Every label, including the geometric $SU(3)_{\st}\in\{1,3,\bar3\}$ carried by $B,C,D,E$, is well defined there: the spacetime-charged residuals descend from the $(\mathbf 3,\mathbf{27})\oplus(\bar{\mathbf 3},\overline{\mathbf{27}})$ (whose outer triplet \emph{is} $SU(3)_{\st}$, $\dim=162$ per branch), and the $SU(3)_{\st}$-singlet sector $A$ descends from the $(\mathbf 1,\mathbf{78})$. Concretely, in $E_6$ the adjoint coset $\mathbf{78}\supset(\mathbf 3,\mathbf 3,\bar{\mathbf 3})$ (generation, color, weak-$SU(3)$) splits under $\bar{\mathbf 3}_{\rm weak}\to\mathbf 2_L\oplus\mathbf 1_L$ as
\begin{equation}
(\mathbf 3,\mathbf 3,\bar{\mathbf 3})\longrightarrow
\underbrace{(\mathbf 3,\mathbf 3,\mathbf 2)}_{Q_L\ (\text{matched})}\oplus\underbrace{(\mathbf 3,\mathbf 3,\mathbf 1)}_{A\ (\text{residual})},
\end{equation}
so that $A=(1_{\st},3_{\gen},3_c,1_L)$ is the $SU(2)_L$-singlet partner of the left-handed quark doublet. With $\dim(A+\bar A)=18$ and $\dim[(B+\bar B)+\cdots+(E+\bar E)]=126$, the residual assembles as $144=18+126$ with the full $248$ exhausted.

We emphasize this linear reading because it is what keeps the geometric index unproblematic. The labels are not assembled from a tensor product of two matter modules; they are components of single adjoints. (Were one instead to try to source the same labels from a cross-branch \emph{bilinear} $27_L\otimes27_R$, the attempt would fail precisely here: a cross-branch bilinear is charged under both branches' geometric groups, $SU(3)_{\st,L}\times SU(3)_{\st,R}$, and cannot carry the single-branch $SU(3)_{\st}$ index that $B$--$E$ require. This is one of the obstructions made precise in Sec.~\ref{sec:disjoint}.)

\section{The accounted matter: $\Cl(6)$ minimal-ideal spinors}\label{sec:cl6}

The chiral matter of the program is \emph{not} an $E_8$ representation. It is realized as minimal left ideals of the complex Clifford algebra $\Cl(6,\mathbb C)$, following Furey's number-operator construction~\cite{Furey} as developed in the exceptional-Jordan / mass-ratio setting~\cite{MassRatios,LRbiquaternion}. The relevant associative algebra is not $\mathbb C\otimes\mathbb O$ itself --- which is eight-complex-dimensional and non-associative --- but the algebra of \emph{left-multiplication (chain) operators} generated by acting with $\mathbb C\otimes\mathbb O$ on itself, which closes on the $64$-complex-dimensional $\Cl(6,\mathbb C)$. Choosing a fiducial idempotent $\omega_+$ one builds three creation/annihilation pairs and the minimal ideal $S=\Cl(6,\mathbb C)\,\omega_+$, an $8$-dimensional space with basis
\begin{equation}
\{\,\nu,\ \bar d_i,\ u_i,\ e^+\,\}\qquad(i=1,2,3),
\end{equation}
i.e.\ exactly one Standard-Model generation of left-handed states. The electric charge is the Furey number operator $Q=\tfrac13 N$, with $N=\sum_i a_i^\dagger a_i$, giving the quantized values $Q(\nu)=0$, $Q(\bar d_i)=\tfrac13$, $Q(u_i)=\tfrac23$, $Q(e^+)=1$. Chirality enters through $\Cl(7)=\Cl(6)_L\oplus\Cl(6)_R$ (two opposite-parity copies in the split bioctonions~\cite{LRbiquaternion}): one copy supplies the left-handed and the other the right-handed states. Three generations arise from the triality structure of the construction (three isomorphic ideals, permuted before triality breaking, organized by the residual $SU(3)_F$ flavor symmetry afterwards)~\cite{MassRatios}. A related division-algebraic/Jordan-algebraic route to three generations was given recently by Furey~\cite{FureyZ2}: Standard-Model-like representations sit inside the Peirce blocks of the $16\times16$ Euclidean Jordan algebra $H_{16}(\mathbb C)$, and the construction includes the lightest states across three generations, with the exception of irreps involving the top quark. Since the irreps involving the top quark are omitted, the displayed third-generation content should not be read as a standalone anomaly-complete Standard-Model family. Here, by contrast, each minimal ideal is a complete family by construction --- the eight states $\{\nu,\bar d_i,u_i,e^+\}$ fill one Standard-Model generation together with its right-handed neutrino --- so the per-generation anomaly freedom of the chiral sector, imported from the program's $\Cl(6)$/$E_6$ construction (Sec.~\ref{sec:anomaly}), is automatic rather than arranged.

Two consequences matter for the present paper.

First, the matter is \emph{spinorial}, in a different lineage from the adjoint-valued reservoir of Sec.~\ref{sec:kvs}. The $\Cl(6)$ ideals are not subspaces of the $E_8$ adjoint; the program's chirality and mass ratios are computed in the ideal picture~\cite{MassRatios}, not in any $E_8$ representation. This is the precise sense in which $E_8\times\omega E_8$ is scaffolding and $E_6\times E_6$ (acting on the ideals) carries the matter.

Second, this is exactly what places the chiral sector outside the Distler--Garibaldi no-go theorem~\cite{DistlerGaribaldi}. That theorem assumes the fermions are components of an $E_8$ representation (in practice the adjoint $\mathbf{248}$) and shows that no embedding of the gravitational and Standard-Model gauge groups into a real or complex form of $E_8$ yields chiral Standard-Model fermions; the would-be fermions come out non-chiral. Here the chiral fermions are $\Cl(6)$ ideal spinors, not adjoint components, so the theorem's hypothesis is not met for the chiral sector. The same style of evasion --- chiral matter realized within a Clifford/Jordan structure rather than as $E_8$ representation components --- is available to Furey's $H_{16}(\mathbb C)$ construction~\cite{FureyZ2}: its Standard-Model-like representations are Peirce-block elements of the $16\times16$ Euclidean Jordan algebra, not $E_8$-adjoint components, and therefore do not meet the Distler--Garibaldi hypothesis. The split-complex grading buys no chirality \emph{at the adjoint level}: $\mathfrak{e}_8\otimes\mathbb{R}[\omega]=\mathfrak{e}_8\oplus\mathfrak{e}_8$ is two real forms of $E_8$, both within the theorem's scope, so the evasion is the spinor realization of matter, not the doubling. The same theorem, read the other way, is consistent with --- and indeed the expression of --- the fact that adjoint-lineage content such as the residual $288$ is non-chiral (mirror-paired); we return to this in Sec.~\ref{sec:anomaly}.

\subsection{The $\mathbf{27}$ of the exceptional Jordan algebra: $24$ octonionic states and $3$ diagonal labels}\label{sec:24v27}
The $E_6$-covariant module used below to classify the bifermionic channels is the $\mathbf{27}$ of $E_6$, realized as the exceptional Jordan algebra $J_3(\mathbb O)$ of $3\times3$ Hermitian octonionic matrices. Its $27$ real dimensions split, intrinsically, as the diagonal and off-diagonal parts of a Hermitian matrix:
\begin{equation}\label{eq:eja}
\dim_\mathbb{R} J_3(\mathbb O)=\underbrace{3}_{\text{real diagonal}}+\underbrace{3\times 8}_{\text{off-diagonal octonions}}=3+24=27 .
\end{equation}
In the comparison map imported from the mass-ratio construction~\cite{MassRatios}, the $24$ off-diagonal octonionic components carry the propagating fermionic content: three generations of one family, eight states per generation, in correspondence with the three $\Cl(6)$ minimal ideals of Sec.~\ref{sec:cl6}. The three real diagonal entries are taken real and equal, and carry the electric-charge / square-root-mass eigenvalue assignment~\cite{MassRatios} (in the Dray--Manogue sense~\cite{DrayManogue1999,ManogueDray2010}); they are constraints/labels, \emph{not} additional propagating fermions. Thus the ``$24$ Weyl states'' and ``the $\mathbf{27}$'' are related but not identical counts: the $\mathbf{27}$ is the $E_6$-covariant comparison module whose off-diagonal part contains the $24$ propagating fermion states and whose diagonal is a small charge/mass-label sector --- itself an instance of the label-versus-particle distinction that organizes this paper.

This is also why the two countings used below are consistent rather than in tension: the decomposition $\overline{27}\otimes27$ of Sec.~\ref{sec:lff} is an $E_6$-covariant \emph{channel classification} over the full comparison module, including the diagonal labels, whereas the Weyl-bilinear count $24\times24=576$ of Sec.~\ref{sec:lagrangian} ranges only over the $24$ off-diagonal (propagating) fermions; the diagonal entries, being labels rather than states, do not enter the latter. Thus $\overline{27}\otimes27$ should not be read as asserting that the three diagonal labels are additional fermion fields. Indeed there is no $E_6$-covariant alternative: the $\mathbf{27}$ is irreducible under $E_6$, whose action mixes the diagonal and off-diagonal parts, so no covariant truncation to the $24$ propagating components exists --- $E_6$ covariance \emph{forces} the channel classification to use the full module. The precise embedding of the $24$ physical states into $J_3(\mathbb O)$, and the charge/mass eigenvalue assignment of the diagonal, are imported from the mass-ratio construction~\cite{MassRatios} and are not re-derived here.

\section{The bifermionic seed $\Lff$ and its complete content}\label{sec:lff}

Recall the three-part split~\eqref{eq:Lgtd} of the GTD Lagrangian. The bifermionic seed
\begin{equation}
\Lff\propto\Tr\!\big(\beta_1\Psi_F^\dagger\beta_2\Psi_F\big)
\end{equation}
is \emph{Hermitian}: it pairs a barred and an unbarred matter factor. Its $E_6$-covariant channel content is therefore classified by the Hermitian (charge-difference) square of the module of Sec.~\ref{sec:24v27}, $\overline{R}\otimes R$. Physically, only the off-diagonal part of $R$ is the propagating fermion sector; the diagonal entries are non-propagating labels. The factors $\beta_1,\beta_2$ are the internal grading operators of the GTD construction and act within a single matter copy; we read $\Lff$ here per branch, so that~\eqref{eq:lffdecomp} below and the Majorana channel of sector $A$ in Sec.~\ref{sec:disjoint} are same-branch statements, and any cross-branch pairing requires the separate interface discussed there. For one $E_6$ branch with the classification module $R=\mathbf{27}$,
\begin{equation}\label{eq:lffdecomp}
\overline{\mathbf{27}}\otimes\mathbf{27}=\mathbf{1}\oplus\mathbf{78}\oplus\mathbf{650},\qquad 1+78+650=729=27^2 .
\end{equation}
The decomposition~\eqref{eq:lffdecomp} is the complete $E_6$-covariant channel classification of $\Lff$ per branch; it uses the full comparison module of Sec.~\ref{sec:24v27}, not a claim that the three diagonal labels are additional fermion fields. Each piece is physically identifiable.

\paragraph{$\mathbf 1$ --- an electroweak-inert singlet.} The $E_6$-singlet channel is a gauge singlet under the full $E_6$, hence carries no electroweak quantum numbers; a condensate in this direction does not break $SU(2)_L\times U(1)_Y$ and is \emph{not} the Standard-Model Higgs. It is at most a real gauge-singlet scalar direction, electroweak-inert.

\paragraph{$\mathbf{78}$ --- gauge currents and the electroweak Higgs.} The label ``$\mathbf{78}$'' denotes \emph{internal} $E_6$ representation content only; the gauge current and the Higgs bilinear occupy \emph{different Lorentz channels} within that same internal representation --- the vector current $\bar\psi\gamma^\mu T^a\psi$ versus the Lorentz-scalar bilinear $\bar\psi\,\Gamma\,\psi$ --- so a Noether current is not being conflated with a scalar Higgs. Because the matter spans both branches, this adjoint content occurs twice --- as a $\mathbf{78}_L$ of the left (visible) $E_6$ and a $\mathbf{78}_R$ of the right (pre-gravitational) $E_6$ --- and each carries its sector's Noether currents together with a composite electroweak doublet sitting in the coset of its weak $SU(3)$.\footnote{Each weak-$SU(3)$ coset of the adjoint is, as a representation, $\mathbf 2_{+1/2}\oplus\bar{\mathbf 2}_{-1/2}$; since the adjoint $\mathbf 8$ is real, these are not independent fields but a single complex doublet $\phi$ together with its conjugate $\phi^\dagger$ (the off-diagonal block of a Hermitian $3\times3$ matrix and its Hermitian transpose). Each coset therefore contributes \emph{one} independent Higgs doublet, and the program's two physical doublets are the cosets of the two branches, not two components of one coset.} Following the program's left--right assignment~\cite{KVS,LRbiquaternion}, the \emph{Standard-Model} Higgs is the $SU(3)_R$-coset doublet of $\mathbf{78}_R$ --- a right-sector (pre-gravitational) object whose vacuum expectation value supplies the Dirac mass of the (left-chiral) Standard-Model fermions --- while the second scalar $H_{\rm ch}$ is the $SU(3)_L$-coset doublet of $\mathbf{78}_L$. Schematically the Standard-Model Higgs is the Lorentz-scalar, color-singlet, weak-doublet bilinear
\begin{equation}\label{eq:higgs}
H_{\rm SM}\sim Z_H^{1/2}\,P_{(1,1,\mathbf 2)_{1/2}}\!\big(\overline{\Psi}_{F}\,\Gamma\,\Psi_{F}\big),
\end{equation}
a $\overline{27}\otimes27$ object, consistent with $H_{\rm SM}$ living in the bare Hermitian seed. We do not fix $\Gamma$ explicitly here, nor derive the doublet hypercharge: because the standard-model hypercharge in this program is the cross-sector Cartan combination $Y=\alpha T^8_L+\beta T^8_R+\gamma T^3_R$ of~\eqref{eq:hypercharge} --- mixing left- and right-branch Cartans rather than being an $SU(3)_L$ quantity alone --- the $Y=\tfrac12$ assignment and the precise projector $\Gamma$ are imported from the mass-ratio construction~\cite{MassRatios} and are not re-derived in this paper. The present argument therefore uses the existence of this electroweak-doublet channel in the imported construction; it does not independently derive the projector. We do not characterize the masses or mixing of $H_{\rm SM}$ and $H_{\rm ch}$~\cite{LRbiquaternion,RajSinghBosonic}. (Each Higgs is thus a current-channel composite of its branch, not the $E_6$-singlet.)

\paragraph{$\mathbf{650}$ --- higher composite channels.} The remaining symmetric bilinears are higher-dimension composite operators in the same matter. They are \emph{not} new fields. A bilinear channel becomes a propagating particle only if the dynamics bind it (a pole / condensate in that channel), which is a question for the effective four-fermion dynamics, not for representation theory; whether any binds, and at what scale, is the open dynamical question of Sec.~\ref{sec:scale}.

\subsection{Compositeness scale, the electroweak scale, and the hierarchy}\label{sec:scale}
We \emph{take} the compositeness (form-factor) scale of these condensates to be the Planck scale, $\Lambda\sim M_{\rm Pl}$ --- an assumption, not a result (open problem O2 of Sec.~\ref{sec:scope}). Two consequences follow, both contingent on it.

First, all compositeness signatures are then suppressed by $\mathcal O(v^2/\Lambda^2)=\mathcal O(v^2/M_{\rm Pl}^2)\sim10^{-34}$: anomalous Higgs couplings, a strongly-coupled sector, and compositeness partners are all unobservably small, so the $125$~GeV scalar~\cite{ATLAS2012,CMS2012} behaves as an effectively fundamental, Standard-Model-like Higgs, consistent with observation. We are careful, however, not to overclaim what the scale alone delivers. A high form-factor scale suppresses compositeness \emph{deviations}, but it does not by itself fix the masses of the bound-state poles: the framework already requires one channel --- the electroweak Higgs doublet --- to be light (at $v$) while the colored and charged channels of $\overline{\mathbf{27}}\otimes\mathbf{27}$ are not, and this asymmetry is a dynamical, channel-selectivity assumption (the attractive-channel hypothesis of~\cite{GTDemergence}: only the electroweak-doublet channel condenses and is tuned light, tied to gravi-weak breaking), \emph{not} a consequence of $\Lambda\sim M_{\rm Pl}$ alone. With that assumption the colored and charged channels carry no sub-Planckian state and the picture is free of light exotic scalars; without it, a dynamical gap/bound-state calculation would be needed to establish their absence. We flag this as an open dynamical input (Sec.~\ref{sec:scope}), not as something established here.

Second, the electroweak scale $v\simeq246$~GeV is \emph{not} the compositeness scale. It is generated separately, at gravi-weak symmetry breaking, where the left--right (gravi-weak) structure breaks to give the Standard Model together with general relativity. We take this as input from the $so(3,3)$ BF construction of Wesley, Singh and Isidro~\cite{WSI}, in which a parent phase yields the gravitational and electroweak sectors by symmetry breaking. The compositeness scale ($M_{\rm Pl}$, where the bilinear forms) and the condensate/VEV scale ($v$, set by gravi-weak breaking) are then two distinct scales of distinct origin.

We state plainly that this \emph{relocates} the electroweak hierarchy rather than solving it: the smallness $v/M_{\rm Pl}\sim10^{-17}$ becomes the hierarchy between the gravi-weak breaking scale and the Planck scale, a number this paper does not derive. We claim only that Planck-scale compositeness is consistent with the absence of compositeness signatures, not that it explains the smallness of $v$. We further flag a tension to be resolved elsewhere in the program: some companion treatments describe the fundamental (quantum-gravity) scale as ``reset'' toward the electroweak scale, whereas the present consistency of the composite Higgs requires the compositeness scale to sit near $M_{\rm Pl}$. These two statements about the fundamental scale must ultimately be reconciled; we adopt $\Lambda\sim M_{\rm Pl}$ here because it is what renders the composite Higgs phenomenologically safe.

\section{The bifermionic seed and the residual labels: the bare seed and the Nambu-doubled reservoir}\label{sec:disjoint}

We now make precise what the bifermionic seed does and does not contain, and where the residual $288$ must instead be read. It is essential to keep apart two distinct objects: the \emph{bare} Hermitian seed $\Lff$ as it stands in~\eqref{eq:Lgtd}, and the \emph{Nambu-doubled} reservoir obtained by adjoining the charge-conjugate field $\Psi_F^c$ --- the only enlargement that could supply charge-sum channels at all. The conclusions differ between the two, and conflating them is what could make a ``the $288$ is composite matter'' reading look stronger than it is. We first record the representation-theoretic input.

\begin{lemma}[Hermitian seed yields no fundamental generation triplet]\label{lem:herm}
Let the matter module carry a generation factor $\mathbf 3_{\gen}$. A strictly Hermitian bilinear $\Psi_F^\dagger M\Psi_F$ has generation content in $\bar{\mathbf 3}_{\gen}\otimes\mathbf 3_{\gen}=\mathbf 1\oplus\mathbf 8$, which contains no fundamental $\mathbf 3_{\gen}$ or $\bar{\mathbf 3}_{\gen}$. By contrast a charge-summed (Majorana/Nambu) bilinear $\Psi_F^T\mathcal C\,M\,\Psi_F$ has generation content in $\bar{\mathbf 3}_{\gen}\otimes\bar{\mathbf 3}_{\gen}=\mathbf 3_{\gen}\oplus\bar{\mathbf 6}_{\gen}$ (or $\mathbf 3_{\gen}\otimes\mathbf 3_{\gen}=\bar{\mathbf 3}_{\gen}\oplus\mathbf 6_{\gen}$), whose antisymmetric part \emph{is} a fundamental triplet.
\end{lemma}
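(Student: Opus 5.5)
The plan is to reduce the statement to elementary $SU(3)$ tensor algebra. The generation group acts on the matter module only through its $\mathbf 3_{\gen}$ factor, and $M$ and $\mathcal C$ act on the remaining (Lorentz, colour, weak, branch-grading) indices. So the generation content of any bilinear is fixed by how the two field factors transform under $SU(3)_{\gen}$.

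\emph{Hermitian case.} Write $\Psi_F=\sum_a e_a\otimes\psi_a$, with $e_a$ a basis of $\mathbf 3_{\gen}$. Then
\[
\Psi_F^\dagger M\Psi_F=\sum_{a,b}\psi_a^\dagger\,(\bar e_a\cdot M e_b)\,\psi_b .
\]
Under $g\in SU(3)_{\gen}$ we have $\Psi_F\mapsto g\Psi_F$ and $\Psi_F^\dagger\mapsto\Psi_F^\dagger g^\dagger$. Hence the coefficient tensor $M^{\bar a}{}_b$ transforms in $\bar{\mathbf 3}\otimes\mathbf 3$. Its decomposition into $\mathbf 1\oplus\mathbf 8$ is just the split into trace and traceless part: $\delta^{\bar a}{}_b$ is invariant, and the traceless matrices form the adjoint. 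I would then show that no fundamental occurs by a triality (centre) argument. The centre element $\zeta\mathbb 1$, with $\zeta=e^{2\pi i/3}$, acts on $\bar{\mathbf 3}\otimes\mathbf 3$ as $\bar\zeta\zeta=1$, but on $\mathbf 3$ or $\bar{\mathbf 3}$ as $\zeta^{\pm1}\neq1$. No $SU(3)$-equivariant map can therefore connect them. This is cleaner than a character computation and covers every choice of $M$ at once.

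\emph{Majorana/Nambu case.} In $\Psi_F^T\mathcal C M\Psi_F$ both factors transform as $\mathbf 3$, so $\Psi_F^T\mapsto\Psi_F^Tg^T$. The coefficient tensor $M_{ab}$ contracted against $\psi_a\psi_b$ therefore lies in $(\mathbf 3\otimes\mathbf 3)^*=\bar{\mathbf 3}\otimes\bar{\mathbf 3}$; the lemma's parenthetical covers the equivalent convention of labelling by the field content, $\mathbf 3\otimes\mathbf 3$. Split into symmetric and antisymmetric parts: $\mathrm{Sym}^2\bar{\mathbf 3}=\bar{\mathbf 6}$, and $\Lambda^2\bar{\mathbf 3}\cong\mathbf 3$ via $M_{ab}\mapsto\epsilon^{cab}M_{ab}$. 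The latter is equivariant because $\det g=1$, and it is an isomorphism by dimension count ($3=3$). Since the centre acts on $\Lambda^2\bar{\mathbf 3}$ by $\bar\zeta^2=\zeta$, this confirms the appearance of a genuine fundamental.

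The main obstacle is not the group theory but making sure the antisymmetric generation channel is not killed by the statistics and the other index contractions. Fermions anticommute, so $\psi_a^T\mathcal C M\psi_b$ is symmetric or antisymmetric in $(a,b)$ according to the symmetry of $\mathcal C M$ on the non-generation indices. I would therefore check that suitable $M$ exist which are symmetric in those indices, for instance with $\mathcal C$ antisymmetric on Weyl spinors and an antisymmetric colour or weak contraction. With such an $M$ the $\epsilon$-contracted $\mathbf 3_{\gen}$ piece survives, which is what "whose antisymmetric part is a fundamental triplet" requires; I would state explicitly that this is a statement about the available channel, not about every $M$.
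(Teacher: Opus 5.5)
Your proposal is correct and takes the same route as the paper, whose one-line proof simply reads the lemma off the $SU(3)$ decompositions $\bar{\mathbf 3}\otimes\mathbf 3=\mathbf 1\oplus\mathbf 8$ and $\mathbf 3\otimes\mathbf 3=\bar{\mathbf 3}_a\oplus\mathbf 6_s$. Your centre-charge argument and the explicit $\epsilon$-isomorphism $\Lambda^2\bar{\mathbf 3}\cong\mathbf 3$ are valid elaborations of that step, and your Fermi-statistics check (antisymmetric $\mathcal C$ times an antisymmetric colour contraction, so that the antisymmetric generation channel survives) is the same check the paper carries out separately, in the proof of Theorem~\ref{thm:disjoint}.
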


\begin{proof}
Immediate from $\bar{\mathbf 3}\otimes\mathbf 3=\mathbf 1\oplus\mathbf 8$ and $\mathbf 3\otimes\mathbf 3=\bar{\mathbf 3}_a\oplus\mathbf 6_s$ for $SU(3)$.
\end{proof}

\begin{theorem}[The bare Hermitian seed contains no charge-sum or fundamental-generation channel]\label{thm:disjoint}
The bare bifermionic seed $\Lff\propto\Tr(\beta_1\Psi_F^\dagger\beta_2\Psi_F)$ contains no charge-sum (Majorana) channel and no fundamental generation triplet. In particular it cannot contain the charge-sum sector $A$, nor the generation-triplet sectors $B$ and $C$.
\end{theorem}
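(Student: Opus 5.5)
The plan is to reduce the theorem to a charge-counting (grading) argument plus Lemma~\ref{lem:herm}. First, I would make precise the sense of ``Hermitian.'' The seed $\Tr(\beta_1\Psi_F^\dagger\beta_2\Psi_F)$ contains exactly one factor $\Psi_F^\dagger$ and one factor $\Psi_F$. The grading operators $\beta_1,\beta_2$ act within a single matter copy, as stated in Sec.~\ref{sec:lff}, so they do not change the transformation type of the factor they multiply. Hence every channel of $\Lff$, per branch, lies in $\overline{R}\otimes R$ with $R=\mathbf{27}$.

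Next, I would use the fermion-number (or any $U(1)$ charge) grading. Assign $\Psi_F$ charge $+1$ and $\Psi_F^\dagger$ charge $-1$; then every term of $\Lff$ has net charge $0$. A charge-sum (Majorana) channel $\Psi_F^T\mathcal C M\Psi_F$ has charge $\pm2$. No linear or covariant projection of a charge-$0$ object can produce a charge-$2$ component, because projection commutes with the $U(1)$ action. So $\Lff$ contains no charge-sum channel. The same conclusion follows from the classification \eqref{eq:lffdecomp}: the decomposition $\mathbf 1\oplus\mathbf{78}\oplus\mathbf{650}$ sits inside $\overline{27}\otimes27$ and does not overlap $27\otimes27=\overline{27}_s\oplus\overline{351}_a\oplus\overline{351'}_s$. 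Since sector $A$ is, by the characterization given in the paper, a charge-sum $27\otimes27$ channel (an $SU(2)_L$-singlet generation triplet with colour triplet), it is excluded.

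For $B$ and $C$, I would restrict the $E_6$ content to $SU(3)_{\gen}$ via the trinification \eqref{eq:e6trin}. Then I would apply Lemma~\ref{lem:herm}: the generation content of any $\overline{27}\otimes27$ channel lies in tensor products of $\mathbf 1,\mathbf 3,\bar{\mathbf 3}$ paired with their conjugates. A cleaner check uses triality: the $SU(3)_{\gen}$ triality of each $\mathbf{27}$ component, taken with its conjugate, sums to $0$ mod $3$. Therefore $\overline{27}\otimes27$ contains only triality-zero $SU(3)_{\gen}$ irreps ($\mathbf 1,\mathbf 8,\mathbf{10},\ldots$), and never $\mathbf 3$ or $\bar{\mathbf 3}$. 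Sectors $B=(3,\bar3,3,1)$ and $C=(\bar3,\bar3,1,1)$ carry a fundamental $\bar{\mathbf 3}_{\gen}$, so they cannot appear.

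The main obstacle is the triality step, because the $\mathbf{27}$ is not a pure $\mathbf 3_{\gen}$ but contains $(\bar{\mathbf 3},\mathbf 1,\mathbf 3)\oplus(\mathbf 1,\mathbf 3,\bar{\mathbf 3})\oplus(\mathbf 3,\bar{\mathbf 3},\mathbf 1)$. Cross terms between different components could in principle give triality $\pm1$ in the generation factor alone. I would first try a \emph{combined} $\mathbb Z_3$ argument: the centre of $E_6$ acts on $\mathbf{27}$ by a single phase, and this phase is a combination of generation, colour and weak trialities. The $E_6$ centre acts trivially on $\overline{27}\otimes27$, so every channel has total triality zero. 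I would then check the residual labels directly. Sector $B$ has total trinification triality $-1-1+1\neq0$ in the embedding conventions, and sector $C$ has $\bar3,\bar3$, which is nonzero. If that bookkeeping fails, I would fall back on the Lemma applied factor by factor, and state the result for the generation grading inherited by the matter module, as the Lemma's hypothesis allows.
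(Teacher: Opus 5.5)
\textbf{Charge-sum clause.} This part is essentially the paper's argument. Fermion number ($0$ for $\Psi_F^\dagger M\Psi_F$, $\pm2$ for $\Psi_F^T\mathcal C M\Psi_F$) is a clean way of saying ``two unbarred factors, $27\otimes27$ rather than $\overline{27}\otimes27$.''

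\textbf{The gap: the $B$/$C$ step.} You correctly see that Lemma~\ref{lem:herm} needs every matter state to carry the same $\mathbf 3_{\gen}$, and the trinified $\mathbf{27}$ of \eqref{eq:e6trin} does not. Indeed, in that picture the blanket ``no fundamental generation triplet'' fails, because the $\mathbf{78}\subset\overline{\mathbf{27}}\otimes\mathbf{27}$ contains the generation-triplet block from which $Q_L$ and $A$ descend (Sec.~\ref{sec:linear}). But the combined $\mathbb Z_3$ you substitute does not work as written. The ``total triality'' $t_{\gen}+t_c+t_L$ is $\equiv0$ on every block of the $\mathbf{27}$: that diagonal $\mathbb Z_3$ is the kernel of $SU(3)^3\to E_6$. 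So demanding that it vanish on $\overline{\mathbf{27}}\otimes\mathbf{27}$ excludes nothing. Your checks for $B$ and $C$ also include the $SU(3)_{\st}$ triality (the ``$\bar 3,\bar 3$'' of $C$ are its $\st$ and $\gen$ labels), which the $E_6$ centre never sees. Your fallback then reinstates exactly the hypothesis you had flagged. That fallback is precisely how the paper's proof disposes of $B$ and $C$: it applies Lemma~\ref{lem:herm}, working in the reading where matter is three $\Cl(6)$ ideals carrying a uniform generation index.

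\textbf{The repair.} Use the actual centre character. Since $t_{\gen}-t_c\equiv2\pmod{3}$ on all three blocks of \eqref{eq:e6trin}, the $E_6$ centre acts by $e^{2\pi i(t_{\gen}-t_c)/3}$. This character is trivial on $\overline{\mathbf{27}}\otimes\mathbf{27}$. It depends only on the $SU(3)_{\gen}\times SU(3)_c$ labels, so it can still be read off after branching to $SU(2)_L$. $B$'s internal part $(\bar{\mathbf 3}_{\gen},\mathbf 3_c)$ gives $1$ and $C$'s $(\bar{\mathbf 3}_{\gen},\mathbf 1_c)$ gives $2$. Hence neither occurs anywhere in $\mathbf 1\oplus\mathbf{78}\oplus\mathbf{650}$; equivalently, their $E_6$ content is a $\mathbf{27}$ or $\overline{\mathbf{27}}$.

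This is a genuinely different route from the paper's. It needs no uniform-generation hypothesis, and it also excludes $D$ and $E$ (values $2$ and $1$), which the paper says its arguments do not reach. But it proves ``no $B$--$E$'', not ``no fundamental generation triplet''; the latter holds only under the Lemma's hypothesis.

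\textbf{Consistency for $A$.} Apply the same character to $A$: its $(\mathbf 3_{\gen},\mathbf 3_c)$ gives $0$, while every channel of $\mathbf{27}\otimes\mathbf{27}$ gives $2+2\equiv1$. So in the trinified picture no charge-sum channel carries $A$'s quantum numbers, and a Hermitian one (in the $\mathbf{78}$) does. Your fermion-number exclusion of $A$ is therefore meaningful only in the uniform-generation reading, where $A$ is \emph{defined} as the Majorana $d^cd^c$ channel. Do not mix the two pictures within one proof.
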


\begin{proof}
$\Lff$ is Hermitian and so lives in $\overline{27}\otimes27=\mathbf 1\oplus\mathbf{78}\oplus\mathbf{650}$ \eqref{eq:lffdecomp}; every channel is a charge-\emph{difference} bilinear, $Q(\psi^\dagger\chi)=Q(\chi)-Q(\psi)$. Sector $A$ is a charge-\emph{sum} object: the diquark-like Majorana channel $A^{kc}=\epsilon^{kgh}\epsilon^{cab}(d^c_{g\bar a})^T\mathcal C\,d^c_{h\bar b}$ with $Q=\tfrac13+\tfrac13=\tfrac23$, a pairing of two unbarred matter factors living in $27\otimes27$, not $\overline{27}\otimes27$. (This channel is non-vanishing under Fermi statistics: as a Grassmann bilinear $M_{XY}\,d^c_X d^c_Y$ in the composite index $X=(\text{gen},\text{colour},\text{spinor})$, it is nonzero precisely when $M$ is antisymmetric under exchange of the two identical fields, and here $M=\epsilon_{\gen}\,\epsilon_c\,\mathcal C$ is a product of three antisymmetric factors --- the generation and colour $\epsilon$-contractions ($\bar{\mathbf 3}\wedge\bar{\mathbf 3}=\mathbf 3$ in each) and the antisymmetric scalar conjugation matrix $\mathcal C$ --- hence antisymmetric overall, as required.) Separately, sectors $B$ and $C$ carry fundamental generation labels ($\bar{\mathbf 3}_{\gen}$), which by Lemma~\ref{lem:herm} a Hermitian bilinear ($\bar{\mathbf 3}_{\gen}\otimes\mathbf 3_{\gen}=\mathbf 1\oplus\mathbf 8$) cannot produce.
\end{proof}

\noindent The remaining residual sectors $D$ and $E$ are generation-singlets whose electric-charge labels are not fixed by a di-fermion content (Sec.~\ref{sec:charges}), so neither the charge-sum nor the generation argument excludes them from the bare seed. Their exclusion from a matter bilinear follows instead from the spacetime-index argument below (Proposition~\ref{prop:doubled}c), and is therefore conditional on the spinor ontology. Note also that Theorem~\ref{thm:disjoint} concerns the \emph{bare} seed --- exactly the obstruction a determined reading circumvents by enlarging the reservoir. We therefore confront that enlargement directly.

\begin{proposition}[The Nambu-doubled reservoir hosts $A$ but not $B$--$E$]\label{prop:doubled}
Let the reservoir be Nambu-doubled, $\mathcal N=(\Psi_F,\Psi_F^c)$, so that charge-summed blocks $\Psi_F^T\mathcal C\,M\,\Psi_F$ are present alongside the Hermitian blocks. Then:
\begin{enumerate}[label=(\alph*)]
\item the charge-structure and generation obstructions of Theorem~\ref{thm:disjoint} are \emph{defeated}: the doubled reservoir contains $27\otimes27$ channels, whose antisymmetric generation part $\bar{\mathbf 3}_{\gen}\wedge\bar{\mathbf 3}_{\gen}=\mathbf 3_{\gen}$ is a fundamental triplet (Lemma~\ref{lem:herm});
\item the $SU(3)_{\st}$-singlet sector $A$ is then sourceable \emph{same-branch}, precisely as the Majorana channel $d^cd^c$ above; but
\item the $SU(3)_{\st}$-charged sectors $B,C,D,E$ are \emph{not} sourceable, in the bare seed or the doubled reservoir, for a reason doubling does not touch --- \emph{given the spinor ontology of Sec.~\ref{sec:cl6}}, under which the matter is $SU(3)_{\st}$-singlet.
\end{enumerate}
\end{proposition}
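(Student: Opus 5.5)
The plan is to treat the three parts separately, since they rest on different inputs: (a) and (b) are representation-theoretic and reuse Lemma~\ref{lem:herm} and the proof of Theorem~\ref{thm:disjoint}, while (c) is a selection-rule argument for the geometric group $SU(3)_{\st}$ and depends on the spinor ontology of Sec.~\ref{sec:cl6}.

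For (a), I will write the Nambu-doubled reservoir $\mathcal N=(\Psi_F,\Psi_F^c)$ as a $2\times2$ block of bilinears: the diagonal Hermitian blocks in $\overline{27}\otimes27$ and the off-diagonal anomalous blocks $\Psi_F^T\mathcal C M\Psi_F$ together with their conjugates, in $27\otimes27$ and $\overline{27}\otimes\overline{27}$. The anomalous blocks are charge-sum objects, with $Q(\psi\chi)=Q(\psi)+Q(\chi)$, so the charge-structure obstruction in the proof of Theorem~\ref{thm:disjoint} no longer applies. Lemma~\ref{lem:herm} then shows that their generation content contains the antisymmetric part $\bar{\mathbf 3}_{\gen}\wedge\bar{\mathbf 3}_{\gen}=\mathbf 3_{\gen}$. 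This removes the generation obstruction as well.

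For (b), I will exhibit the channel explicitly. It is the same $A^{kc}=\epsilon^{kgh}\epsilon^{cab}(d^c_{g\bar a})^T\mathcal C\,d^c_{h\bar b}$ already written in the proof of Theorem~\ref{thm:disjoint}, now a legitimate anomalous block of $\mathcal N$ in one branch. I will check four things:
\begin{itemize}
\item its quantum numbers are $(1_{\st},3_{\gen},3_c,1_L)$, since $\epsilon_c$ makes it a colour triplet and the two $d^c$ are weak singlets;
\item it is $SU(3)_{\st}$-singlet because each $\Cl(6)$ factor is;
\item it has $Q=\tfrac23$, as stated in that proof;
\item it is nonvanishing, by the Grassmann-antisymmetry argument already given in that proof.
\end{itemize}
Together these match the row $A$ of Table~\ref{tab:partition}, with $\bar A$ supplied by the conjugate block.

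For (c), I will prove a selection rule. By Sec.~\ref{sec:cl6} the minimal ideals $S=\Cl(6,\mathbb C)\omega_+$, in both chiralities and all three generations, are invariant under $SU(3)_{\st}$, so each field transforms as $\mathbf 1_{\st}$. Any bilinear, Hermitian or charge-summed, then transforms in $\mathbf 1\otimes\mathbf 1=\mathbf 1$, or in $\bar{\mathbf 1}\otimes\mathbf 1=\mathbf 1$. The Nambu doubling adjoins only $\Psi_F^c$, which is again $SU(3)_{\st}$-singlet because conjugating a singlet gives a singlet. Hence every block of $\mathcal N$ has trivial $SU(3)_{\st}$ content, and by Schur's lemma no bilinear can project onto $B,C,D,E$, which carry $\mathbf 3_{\st}$ or $\bar{\mathbf 3}_{\st}$. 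I will also treat cross-branch pairings, as in Sec.~\ref{sec:linear}. Such a bilinear is a singlet under $SU(3)_{\st,L}\times SU(3)_{\st,R}$ for the same reason. Even an ontology with non-singlet matter would give $(\mathbf r_L,\mathbf r_R)$ content, not a single-branch triplet, so doubling cannot help.

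The main obstacle is (c): the claim that the $E_6\times E_6$ action on the ideals commutes with, and leaves inert, the geometric $SU(3)_{\st}$. This is imported, and it is not derived in the paper. I would state it explicitly as the hypothesis ``matter is $SU(3)_{\st}$-singlet'' and make the conclusion conditional on it, as the statement already does.
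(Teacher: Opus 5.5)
Your proposal is correct and follows essentially the same route as the paper's proof: (a) from $\bar{\mathbf 3}\otimes\bar{\mathbf 3}=\mathbf 3_a\oplus\bar{\mathbf 6}_s$ via Lemma~\ref{lem:herm}, (b) by admitting the same-branch Majorana channel $A^{kc}$, and (c) by noting that every bilinear of $SU(3)_{\st}$-singlet matter is an $SU(3)_{\st}$ singlet (Hermitian or Majorana, same- or cross-branch), with the conclusion conditional on the spinor ontology. Your Schur's-lemma phrasing and your explicit check that $\Psi_F^c$ is also a singlet only spell out what the paper leaves implicit.
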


\begin{proof}
(a) is immediate from $\bar{\mathbf 3}\otimes\bar{\mathbf 3}=\mathbf 3_a\oplus\bar{\mathbf 6}_s$. (b): with the charge-sum block now present, the channel $A^{kc}$ of Theorem~\ref{thm:disjoint} is admissible, and $A=(1_{\st},3_{\gen},3_c,1_L)$ is an $SU(3)_{\st}$ singlet built from same-branch matter, so no cross-branch interface is needed. (c): the chiral matter is realized as $\Cl(6)$ minimal-ideal spinors (Sec.~\ref{sec:cl6}), which are $SU(3)_{\st}$ \emph{singlets} --- the geometric index is the outer factor of $E_8\supset SU(3)_{\st}\times E_6$, carried by the adjoint scaffolding, not by the $E_6$-internal matter. (This is a statement about the adopted spinor reading; in the rejected adjoint reading the matter $\mathbf{27}$ sits inside the $(\mathbf 3,\mathbf{27})$ and \emph{does} carry $SU(3)_{\st}$, so the obstruction below is conditional on the ontology, not absolute --- see Remark~\ref{rem:method}.) Any bilinear of matter fields, of either chirality, Hermitian or Majorana, same-branch or cross-branch, is therefore a product of $SU(3)_{\st}$ singlets and is itself an $SU(3)_{\st}$ singlet. The sectors $B,C,D,E$ carry $SU(3)_{\st}\in\{\mathbf 3,\bar{\mathbf 3}\}$ (Table~\ref{tab:partition}) and so cannot arise from any matter bilinear. (Equivalently, a cross-branch interface $27_L\otimes27_R$ transforms as a $(\mathbf 3,\mathbf 3)$ bi-representation of $SU(3)_{\st,L}\times SU(3)_{\st,R}$, not as a single-branch $SU(3)_{\st}$ irrep; and identifying the two geometric groups by hand --- a further, underived intertwiner --- would give $\mathbf 3\otimes\mathbf 3=\bar{\mathbf 3}\oplus\mathbf 6$, still with no fundamental $\mathbf 3_{\st}$ to source $B$ or $E$.) This is the $252$-of-$288$ st-charged content, unsourceable in either reservoir.
\end{proof}

\begin{remark}[What is a theorem here, and what is a methodological choice]\label{rem:method}
Collecting the above: for the $SU(3)_{\st}$-\emph{charged} sectors $B,C,D,E$ --- $252$ of the $288$ --- the seed cannot host them in any form, bare or doubled, because matter is $SU(3)_{\st}$-singlet (Proposition~\ref{prop:doubled}c). This is a no-go \emph{conditional on the spinor ontology}: it follows rigorously once one adopts the $\Cl(6)$ reading in which physical matter carries no geometric index, but it is not an ontology-independent representation-theoretic fact --- in the adjoint reading the matter $\mathbf{27}$ lives in $(\mathbf 3,\mathbf{27})$ and does carry $SU(3)_{\st}$. The spinor ontology is independently motivated (it is where the program's chirality and mass ratios are computed, and it is what evades Distler--Garibaldi), so the conditional is a reasonable one to adopt; but we flag that the force of the $B$--$E$ obstruction is inherited from that choice rather than standing alone. For the $SU(3)_{\st}$-\emph{singlet} sector $A$ --- $36$ of the $288$ --- the bare seed cannot host it (Theorem~\ref{thm:disjoint}) but the Nambu-doubled reservoir \emph{can} (Proposition~\ref{prop:doubled}b). The case for nonetheless reading $A$ as a label, rather than as a composite of the doubled seed, is therefore \emph{not} a theorem; it is methodological, and rests on three considerations: (i) the Nambu doubling is an \emph{ad hoc} enlargement of the GTD seed with no independent derivation (an open problem, not a result); (ii) treating $A$ as composite while $B$--$E$ are labels would split the $288$ into two ontologies on no principled ground; and (iii) $A$, like the rest of the $288$, is $E_8$-adjoint lineage (it descends from the $(\mathbf 1,\mathbf{78})$, Sec.~\ref{sec:linear}), whereas the chiral matter is $\Cl(6)$-spinor lineage, so reading $A$ as composite matter would cross the lineage boundary that organizes the whole construction. We therefore read all of $A$--$E$ as labels, while stating plainly that for $A$ this is a choice supported by uniformity and by the absence of a motivated doubling, not a no-go.

Little of observational consequence rides on this choice in the $\Lambda\sim M_{\rm Pl}$ branch adopted here, because of the compositeness scale. Were $A$ physical in the composite reading, it would be the bosonized $d^cd^c$ order parameter $\Phi_A\sim(3_c,1_L)_{2/3}$, with an effective action $Z_A|D_\mu\Phi_A|^2-M_A^2|\Phi_A|^2-\lambda|\Phi_A|^4+(g_A\,\Phi_A^\dagger d^cd^c+\mathrm{h.c.})$ whose kinetic normalization, mass and couplings are all set at the compositeness scale $\Lambda\sim M_{\rm Pl}$ (Sec.~\ref{sec:scale}), exactly as for the Higgs; hence $M_A\sim M_{\rm Pl}$ and $A$ is not a sub-Planckian state. The diquark coupling $g_A\,\Phi_A^\dagger d^cd^c$ conserves baryon number (assigning $\Phi_A$ baryon number $-\tfrac23$), so it does not by itself mediate proton decay; any baryon-number-violating completion (a competing leptoquark coupling, or a $\Delta B=2$ Majorana mass for $\Phi_A$) is in any case Planck-suppressed and far below current bounds~\cite{SuperKpion}. The reading that \emph{would} be observationally distinct --- a fundamental TeV-scale vector-like quark $T\sim(3_c,1_L)_{2/3}$ with $\bar T(i\gamma^\mu D_\mu-M_T)T+(y\,\bar T_L t_R H+\mathrm{h.c.})$ --- is foreign to the construction, since such a $T$ is neither a $\Cl(6)$ ideal nor an adjoint label. We stress the contingency: this conclusion holds in the $\Lambda\sim M_{\rm Pl}$ branch. In the competing branch in which the fundamental scale is reset toward the electroweak scale (Sec.~\ref{sec:scale}, open problem O2), a composite $A$ would instead sit near $\Lambda\sim v$, i.e.\ a light colored diquark already excluded by collider searches --- a phenomenological liability, not a cost-free choice. Which branch obtains is the unresolved scale question of O2; we record that the label reading of $A$ is benign in the Planckian branch and would be \emph{forced} (on pain of an excluded light scalar) in the electroweak branch, so reading $A$ as a label is safe either way, while reading it as a sub-Planckian composite is viable in neither.
\end{remark}

\begin{remark}[The $576=2\times288$ coincidence does not survive]\label{rem:576}
The coincidence~\eqref{eq:576}, $24\times24=576=2\times288$, suggests the $288$ and its conjugate are exactly the branchwise fermion bilinears of the seed. It does not survive. Even granting the Nambu doubling the coincidence tacitly requires (so that charge-sum bilinears exist at all), the st-charged sectors $B$--$E$ --- the bulk, $252$ of the $288$ --- cannot be matched by any matter bilinear (Proposition~\ref{prop:doubled}c); only the $36$ st-singlet labels of $A$ could correspond to bilinears, and even that is a same-branch \emph{possibility}, not a forced identification (Remark~\ref{rem:method}). That two countings meet at the integer $288$ is a coincidence of dimensions, not a correspondence of representations.
\end{remark}

\begin{remark}[The positive reading]
The upshot is not that the seed is forbidden all contact with the residual labels --- the doubled reservoir can reach the single sector $A$ --- but that the seed is \emph{not the home} of the $288$: it cannot host $252$ of them in any form, and hosts the remaining $36$ only under an unmotivated enlargement that would fracture the lineage organization. The clean reading is the linear one: $\Lff$ does the work of the $\mathbf{78}_{L,R}$ gauge currents and the electroweak Higgs sector; the residual $288$ is read where it uniformly lives, the $E_8$-adjoint ledger of Table~\ref{tab:partition}; and the chiral matter is the $\Cl(6)$ spinor sector. Three lineages, kept distinct: spinor matter, $\overline{27}\otimes27$ currents-plus-Higgs, and the $E_8$-adjoint label count.
\end{remark}

\subsection{Summary of what is established}\label{sec:summary}
Because the foregoing separates cleanly into results, conditional results, choices, and inputs, we collect the logical status in one place. \emph{Derived, unconditional:} the decomposition $\overline{27}\otimes27=\mathbf 1\oplus\mathbf{78}\oplus\mathbf{650}$; that the bare Hermitian seed contains no charge-sum channel and no fundamental generation triplet, hence not $A$, $B$, or $C$ (Theorem~\ref{thm:disjoint}); and that no bilinear of $SU(3)_{\st}$-singlet matter can carry an $SU(3)_{\st}$ index. \emph{Derived, conditional on the spinor ontology:} that the $252$ spacetime-charged labels $B$--$E$ cannot be matter composites in any reservoir, bare or Nambu-doubled (Proposition~\ref{prop:doubled}c) --- rigorous once the $\Cl(6)$ reading is adopted, but inheriting its premise from that choice. \emph{A methodological choice, not a theorem:} that the $SU(3)_{\st}$-singlet sector $A$, which the doubled seed \emph{can} reach, is nonetheless read as a label (Remark~\ref{rem:method}); this choice is in any case observationally benign in the Planckian branch and forced in the electroweak branch. \emph{Imported:} the $\Cl(6)$ chiral matter, its three generations, and its anomaly freedom~\cite{MassRatios,LRbiquaternion}; the matched/residual partition of Table~\ref{tab:partition}~\cite{KVS}; and the $208$ count~\cite{GTDemergence}. \emph{Assumed:} the Planck-scale compositeness and the gravi-weak origin of $v$~\cite{WSI}. The genuinely new content here is the structural separation of three lineages --- spinor matter, the $\overline{27}\otimes27$ currents-plus-Higgs, and the $E_8$-adjoint label ledger --- and the demonstration that the $288$ is not the home of composite matter. We represent it as neither more nor less than that.

\subsection{Electric charges of the residual labels, for completeness}\label{sec:charges}
Although the residual sectors are not dynamical, their electric-charge labels are fixed by the additive $\Cl(6)$ functional $Q_{\rm em}=\tfrac13(N_1+N_2)$ on the two spinor constituents of each channel, with constituent values in $\{0,\tfrac13,\tfrac23,1\}$. Three sectors are fixed unambiguously:
\begin{equation}
A=d^c d^c:\ \tfrac13+\tfrac13=\tfrac23;\qquad
C=d^c u:\ \tfrac13+\tfrac23=1;\qquad
D=\nu(u,d):\ (\tfrac23,-\tfrac13),\ Y=\tfrac16 .
\end{equation}
Two are not fixed by a di-fermion content: sector $B$ ($3_c,Q=0$) admits no neutral colour-triplet di-fermion from $\{\nu,e,u,d\}^{\otimes2}$, and sector $E$ ($\bar3_c,Q=2/3$) has no consistent di-fermion candidate; their charge labels are inherited from the KVS branching and we do not derive them. We note for the avoidance of doubt that this additive functional is the operative one, \emph{not} the cross-sector Cartan generator~\eqref{eq:hypercharge}: the latter is the hypercharge of the physical matter fields and returns zero on the electroweak-singlet legs that label $A$ and $C$. The charge labels play no role in Theorem~\ref{thm:disjoint}; they are recorded only to complete Table~\ref{tab:partition}.

\section{Anomaly safety}\label{sec:anomaly}

Anomaly cancellation is a consistency check that the construction passes, but --- we are explicit --- it is \emph{not} evidence for the construction, and the two sectors must be treated separately.

The \emph{chiral} sector is the $\Cl(6)$ spinor matter of Sec.~\ref{sec:cl6}: per generation it is the Standard-Model multiplet supplemented by a right-handed neutrino, which is anomaly-free. We take this anomaly freedom as input from the program's $\Cl(6)/E_6$ construction~\cite{MassRatios,LRbiquaternion}; it is not re-derived here. The absence of an independent cubic Casimir in $E_6$ is consistent with this, but --- since the physical fermions are not literally $E_8$-adjoint components and the hypercharge is a cross-sector Cartan combination --- the parent-group statement is not by itself the complete anomaly proof in the present spinor ontology; the rigorous statement is that the imported $\Cl(6)$ multiplet fills a complete anomaly-free Standard-Model representation. For chiral matter, anomaly cancellation functions as an existence argument --- it pins the Standard-Model hypercharges --- and this argument is available precisely because the matter is the chiral spinor sector. The program's one additional gauged abelian beyond the Standard Model, the dark electromagnetism $U(1)_{\rm dem}$ of the right (pre-gravitational) branch~\cite{STM,WSI}, does not disturb this: its charge is a chirality-blind function of the fermion mass, so it acts vectorially on each Dirac species, and its triangle coefficients ($[U(1)_{\rm dem}]^3$, the mixed $[SU(3)_c]^2U(1)_{\rm dem}$ and $[U(1)_{\rm em}]^2U(1)_{\rm dem}$, and the gravitational $U(1)_{\rm dem}$) cancel within each species independently of the charge values. We note, but do not resolve here, that a mass-dependent coupling cannot be a Cartan charge of the unbroken electroweak group --- the two members of an $SU(2)_L$ doublet have unequal masses --- so $U(1)_{\rm dem}$ is a symmetry of the broken phase whose reconciliation with its quantized Cartan origin $U(1)_{Y_{\rm dem}}$ belongs to the gravi-dem sector~\cite{STM,WSI}, not to the present paper.

The \emph{residual} $288$ is non-chiral by its adjoint lineage, and this can be made precise rather than left as an assertion. The labels are components of the two $E_8$ adjoints, and the adjoint of $E_8$ is a \emph{real} (self-conjugate) representation; the same is therefore true of any decomposition of it under a subgroup. Concretely, in Table~\ref{tab:partition} every residual sector is a complex representation of $SU(3)_{\st}\times SU(3)_{\gen}\times SU(3)_c\times SU(2)_L$ and appears together with its conjugate ($A\oplus\bar A$, $B\oplus\bar B$, and so on), so the cubic anomaly of each sector vanishes identically, term by term, for every gauge factor: $\mathcal A(R\oplus\bar R)=\mathcal A(R)+\mathcal A(\bar R)=\mathcal A(R)-\mathcal A(R)=0$. This is consistent with Distler--Garibaldi~\cite{DistlerGaribaldi} (adjoint-embedded content comes out non-chiral, mirror-paired). Indeed the conjugate-completion is not special to the residual block: the entire $496$-label reservoir of Table~\ref{tab:partition} --- the matched gauge and electroweak-breaking labels and the matched weak-doublet labels $Q_L,L_L$ included, the latter carrying an $SU(3)_{\st}$ index just as $B,E$ do --- is self-conjugate, because it is the (real) $E_8\oplus E_8$ adjoint. Hence the whole adjoint lineage is anomaly-mute as a matter of representation reality, and every gauge anomaly of the construction arises from the spinor lineage alone (the chiral $\Cl(6)$ matter above). A non-chiral, conjugate-completed sector sources no triangle anomaly, whether its labels persist as bookkeeping or were ever to localize as scalars or as vector-like Dirac pairs. Consequently anomaly conditions are silent on the residual sector. This silence is the correct expectation --- not a result to be celebrated --- and in particular anomaly cancellation is \emph{not} among the reasons the $288$ is read as labels rather than as a spectrum. Those reasons are the lineage separation (Sec.~\ref{sec:cl6}), the disjointness from the matter Lagrangian (Theorem~\ref{thm:disjoint}), and the absence of any localization mechanism (Sec.~\ref{sec:scope}).

\section{Scope, falsifiability, and open problems}\label{sec:scope}

This paper makes no \emph{positive} prediction of a new particle with a definite mass: the residual $288$ predicts no masses, widths, lifetimes or couplings, and the composite Higgs sector has a Planck-scale compositeness whose signatures are suppressed by $v^2/M_{\rm Pl}^2$ and whose masses and mixings we do not compute. What the ontology fixes instead is \emph{structural}: the matter and scalar \emph{content} the framework admits. We are careful below to separate that structural statement --- which is the real payoff --- from its (limited) power to \emph{discriminate} the framework experimentally, since the headline exclusions it implies coincide with the null results of the Standard Model.

\subsection{The ontology and its empirical content}\label{sec:ontology}
The reading defended here is more than bookkeeping hygiene: it fixes what matter and scalar content the framework does and does not contain --- a statement about the \emph{economy} of the theory, and its main payoff. The frequent objection that ``$E_8\times E_8$ is too large'' is answered directly --- the $496$ is the dimension of an adjoint \emph{label} ledger, not a particle count, and the residual $288$ in particular corresponds to no propagating fields. The size of the scaffolding is therefore not a phenomenological embarrassment; it carries no surplus spectrum to be hunted for.

Concretely, the matter content is exhausted by the $\Cl(6)$ minimal-ideal spinors --- the Standard-Model fermions together with one right-handed (sterile) neutrino per generation --- and the propagating scalars are the Standard-Model Higgs (the $SU(3)_R$-coset doublet of the right, pre-gravitational branch, giving mass to the Standard-Model fermions) and a second scalar $H_{\rm ch}$ (the $SU(3)_L$-coset doublet of the left, visible branch). As a statement about its \emph{content}, the framework therefore admits, beyond the Standard Model, \emph{only}:
\begin{enumerate}[label=(\roman*)]
\item right-handed / sterile neutrinos, one per generation, completing each $\Cl(6)$ ideal (their mass naturally of seesaw type~\cite{Minkowski1977}); and
\item a second, dominantly left-sector composite scalar $H_{\rm ch}$.
\end{enumerate}
Equivalently, the theory has no room for superpartners, for a fourth generation (there are exactly three $\Cl(6)$ ideals, related by triality), or for light exotic colored or charged states: there is no place in the $\Cl(6)$ ideal structure or in the adjoint-label ledger for any of them to sit. Any gauge bosons of the right-handed (gravi-weak) sector lie at the high gravi-weak breaking scale, not at collider energies.

We are deliberately careful, however, not to inflate this structural economy into a sharp experimental \emph{discriminator}, and here we part company with the way such ``predictions'' are sometimes advertised. The exclusions just stated --- no superpartners, no fourth generation, no light exotics --- coincide with the null results of the Standard Model itself. Observing any of them would already falsify the Standard Model; their continued \emph{absence} confirms nothing specific about $E_8\times\omega E_8$. These exclusions therefore carry essentially no power to distinguish this framework from the plain Standard Model, and we do not claim otherwise. The genuinely distinctive content is the two items (i)--(ii): the sterile neutrinos --- which the framework shares with a broad class of seesaw and right-handed-neutrino models, so that even these are not unique to it --- and the second composite scalar, whose mass and couplings we do not compute and which is therefore not yet a sharp, testable prediction. The honest position is that the ontology is predictive about what the theory \emph{contains}, but that its present contact with experiment is weak: it is consistent with all current data, and it becomes genuinely testable only once (i) or (ii) is sharpened into a number (Sec.~\ref{sec:scope}).

Two qualifications on the exclusions themselves. First, that \emph{no light exotic colored or charged scalar} arises from the residual sector rests on (a) the label reading of the $288$ established above and (b) the Planck-scale compositeness and channel-selectivity assumptions of Sec.~\ref{sec:scale}; in the competing electroweak-scale branch (open problem O2) such a scalar would be light and in tension with collider data, so that branch is disfavoured precisely by the non-observation of light exotics --- the one exclusion that does bear on the construction, since it discriminates between the program's two scale branches rather than between the program and the Standard Model. Second, ``no supersymmetry'' means that the construction neither contains nor requires superpartners; it is not a claim that supersymmetry is absent in nature. What is falsifiable is the framework as a \emph{complete} description --- it would be incomplete or wrong were superpartners, a fourth generation, or new light matter found --- but, as just stressed, so would the Standard Model, which is why we rest no discriminating weight on these.

Finally, lest the modest empirical reach of \emph{this} result be mistaken for a verdict on the program at large: the question treated here --- whether the residual $288$ is particle content --- is by its nature not where the program's testable physics lies. That physics sits in the flavor, gauge, gravitational, and quantum-foundational sectors, and it is assembled and classified by logical strength in a separate falsification-oriented catalogue~\cite{Predictions2026}, which is as explicit as we are here about which claims are distinctive and which are generic, and about where present data already press back. Its sharpest entries range from parameter-free flavor relations --- for example $\sqrt{m_e}:\sqrt{m_u}:\sqrt{m_d}=1:2:3$ and $m_\tau/m_\mu=m_s/m_d$, whose status depends on a common-scale evaluation of the running masses --- to bolder and riskier quantum-foundational tests, such as a fermion-only objective-collapse sector and a possible Bell correlation beyond the Tsirelson bound. We make no assessment of those claims in the present paper, and we do not rest the case for this paper on them; we note only that the program does expose itself to cross-sector falsification elsewhere, even where the narrow sector treated here does not.

\subsection{Open problems}
The genuine open problems are dynamical and foundational, not representation-theoretic:
\begin{enumerate}[label=(O\arabic*)]
\item \emph{Localization of the Higgs channel.} The condensation of the electroweak-doublet channel to a physical Higgs is argued (the attractive-channel hypothesis of~\cite{GTDemergence}), not established by a closed bound-state calculation. The masses and mixing of $H_{\rm SM}$ and $H_{\rm ch}$ are not derived.
\item \emph{The compositeness scale.} We \emph{take} $\Lambda\sim M_{\rm Pl}$; deriving it, and reconciling it with the gravi-weak origin of $v$~\cite{WSI}, is open. The electroweak hierarchy $v/M_{\rm Pl}\sim10^{-17}$ is relocated to gravi-weak breaking, not solved (Sec.~\ref{sec:scale}).
\item \emph{The comparison-to-physical map.} The identification of the trinification comparison labels of Sec.~\ref{sec:kvs} with the $\Cl(6)$ three-generation module of~\cite{MassRatios} (in which a single $SU(3)$ carries three generations through the left--right Cartan hypercharge) is imported, not re-derived here.
\item \emph{The partition rule.} Table~\ref{tab:partition} is taken as the KVS bookkeeping; whether the matched/residual criterion is forced by first principles, and the two caveats noted in Sec.~\ref{sec:partition}, remain open.
\end{enumerate}
None of these affects the central representation-theoretic statements: $\Lff$ has fully known $E_6$-covariant channel content, classified by $\overline{27}\otimes27=\mathbf 1\oplus\mathbf{78}\oplus\mathbf{650}$, the bare seed excludes the charge-sum and fundamental-generation sectors $A,B,C$ (Theorem~\ref{thm:disjoint}), and the $252$ $SU(3)_{\st}$-charged labels $B$--$E$ are unsourceable from any matter bilinear once the spinor ontology is adopted (Proposition~\ref{prop:doubled}).

\section{Conclusion}\label{sec:concl}

The way to deal with the residual $288$ of the $E_8\times\omega E_8$ program is to recognize what it is and what it is not. It is the linear adjoint-label ledger of the two-branch scaffolding (Table~\ref{tab:partition}), an $E_8$-adjoint-lineage bookkeeping object in which every label --- including the geometric $SU(3)_{\st}$ index --- is well defined. It is not composite matter, and it should not be read as channels of the bifermionic GTD seed. Theorem~\ref{thm:disjoint} excludes the charge-sum and fundamental-generation sectors ($A,B,C$) from the bare Hermitian seed, whose $E_6$-covariant channels are classified by $\overline{27}\otimes27=\mathbf 1\oplus\mathbf{78}\oplus\mathbf{650}$, and Proposition~\ref{prop:doubled} excludes the $SU(3)_{\st}$-charged sectors ($B,C,D,E$, the $252$ st-charged labels) from any matter bilinear once the $\Cl(6)$ spinor ontology is adopted, since that matter is $SU(3)_{\st}$-singlet; even the Nambu-doubled reservoir reaches only the single $SU(3)_{\st}$-singlet sector $A$, and reading $A$ too as a label is a lineage-uniformity choice, honestly flagged, rather than a no-go. The seed's own work is the $\mathbf{78}_{L,R}$ gauge currents and the electroweak Higgs sector; the $\mathbf{650}$ is higher composites whose fate is dynamical. The accounted, chiral, anomaly-free matter lives in none of these adjoint-lineage objects but in the $\Cl(6)$ minimal-ideal spinors, which is also why it evades the Distler--Garibaldi theorem. With the composite Higgs taken at Planck-scale compositeness and the electroweak scale supplied by gravi-weak breaking, the picture is consistent with data, predicts no new sub-Planckian states (contingent on that scale), and relocates --- without solving --- the electroweak hierarchy. The residual $288$ is, in the end, a consistency ledger for the scaffolding: real as bookkeeping, silent as physics.

\subsection*{Outlook: what remains to be done in the program}
The present result closes one bookkeeping question and thereby sharpens the questions that actually carry the physics. Setting the $288$ aside as labels, the program's open frontier is dynamical and structural, and we list it in rough order of logical priority.

\emph{(1) From representation to dynamics.} The decisive missing ingredient is a closed treatment of which channels of $\Lff$ bind. The localization of the electroweak-doublet channel to the Higgs is presently \emph{argued} (an attractive-channel hypothesis), not established by a bound-state/gap-equation calculation; a genuine NJL-type analysis~\cite{NJL1961,BardeenHillLindner1990} of $\overline{27}\otimes27$ in the GTD effective action would either confirm the single-channel condensate or reveal additional condensing channels, and would supply the Higgs mass and the $H_{\rm SM}$--$H_{\rm ch}$ mixing the program currently leaves free.

\emph{(2) The two scales.} The compositeness scale is here \emph{taken} to be $M_{\rm Pl}$ and the electroweak scale is \emph{imported} from gravi-weak breaking~\cite{WSI}. Deriving both from one dynamics --- and with them the hierarchy $v/M_{\rm Pl}\sim10^{-17}$, which this paper only relocates --- is the central naturalness problem the framework inherits and must eventually confront.

\emph{(3) The gravi-weak sector and gravity.} The claim that left--right (gravi-weak) breaking yields the Standard Model together with general relativity rests on the $so(3,3)$ BF construction~\cite{WSI}; completing the emergence of dynamical gravity, and connecting the pre-gravitational $SU(2)_R$ sector to a metric theory, remains the program's most ambitious open task.

\emph{(4) The comparison-to-physical map.} The trinification comparison labels (Sec.~\ref{sec:kvs}) must be shown to map consistently onto the $\Cl(6)$ three-generation module of the mass-ratio construction~\cite{MassRatios}, in which a single $SU(3)$ carries three generations through the cross-sector Cartan hypercharge. Until this map is explicit, the family structure is imported rather than unified.

\emph{(5) Flavor and the UV completion.} Beyond mass ratios, the program owes the CKM and PMNS mixing matrices (as the misalignment between the charge and square-root-mass bases), a quantum theory of the GTD/trace-dynamics substrate from which $\Lff$ descends, and a derivation --- rather than postulation --- of the matched/residual partition rule of Table~\ref{tab:partition}.

\emph{(6) Empirical contact.} The construction makes no \emph{positive} mass or coupling prediction, and (Sec.~\ref{sec:ontology}) its present contact with experiment is weak: the structural exclusions it implies --- no superpartners, no fourth generation, no light exotics --- coincide with the null results of the Standard Model, and its distinctive content (sterile neutrinos and a second composite scalar) is not yet quantitative. (The program's broader, cross-sector empirical claims, outside the scope of this paper, are catalogued and graded by logical strength in~\cite{Predictions2026}.) Turning the latter into \emph{positive} numbers --- masses or couplings that data could confront --- is what remains: a localized residual channel with a predicted mass (none is derived here), a sharp consequence of the two composite scalars, or the cross-sector mass-ratio relations of~\cite{MassRatios}. Any of these would be the eventual quantitative test of the program.

In short, the $288$ was the easiest of these to settle, and settling it correctly --- as scaffolding labels, decoupled from the matter Lagrangian --- removes a distraction so that effort can move to the dynamics, the scales, and gravity, where the program's real content and its real difficulties lie.

\section*{Acknowledgements}
It is  a pleasure to thank Jose Isidro, Ashutosh Kotwal, Bishnu Gupta Teli and P Samuel Wesley for helpful discussions.

\noindent{\bf Use of generative AI:} During the preparation of this manuscript, the author used Open AI's GPT 5.5 Pro and Anthropic's Claude Pro Max (Opus 4.8, Fable 5) in adversarial mode, for support in the technical analysis, organisation, writing,  and editing of the manuscript. The original ideas are due to the author. Author takes full intellectual responsibility for the content of the manuscript.


\begin{thebibliography}{99}

\bibitem{KVS}
P.~Kaushik, V.~Vaibhav, and T.~P.~Singh, ``An $E_8\otimes E_8$ unification of the Standard Model with pre-gravitation, on an exceptional Lie-algebra valued space,'' arXiv:2206.06911 [hep-ph] (under review at \emph{Zeitschrift f\"ur Naturforschung A}).

\bibitem{GTDemergence}
T.~P.~Singh, ``Towards deriving the Standard Model coupled to gravity from Generalized Trace Dynamics via the spectral action principle,'' Preprints 2026, 2026051806, doi:\href{https://doi.org/10.20944/preprints202605.1806.v1}{10.20944/preprints202605.1806.v1} (under review at \emph{Physical Review D}).

\bibitem{MassRatios}
T.~P.~Singh, ``Fermion mass ratios from the exceptional Jordan algebra,'' arXiv:2508.10131 [hep-ph] (under review at \emph{Annalen der Physik}).

\bibitem{STM}
T.~P.~Singh, ``Spacetime and internal symmetry from split bioctonions and the two extra $SU(3)$'s of $E_8\times\omega E_8$,'' Preprints 2025, 2025100437, doi:\href{https://doi.org/10.20944/preprints202510.0437.v1}{10.20944/preprints202510.0437.v1} (under review at \emph{Fortschritte der Physik}).

\bibitem{LRbiquaternion}
V.~Vaibhav and T.~P.~Singh, ``Left-right symmetric fermions and sterile neutrinos from complex split biquaternions and bioctonions,'' Adv.\ Appl.\ Clifford Algebras \textbf{33}, 32 (2023), arXiv:2108.01858 [hep-ph].

\bibitem{GunaydinGursey1973}
M.~G\"unaydin and F.~G\"ursey, ``Quark structure and octonions,'' J.\ Math.\ Phys.\ \textbf{14}, 1651--1667 (1973).

\bibitem{GRS1976}
F.~G\"ursey, P.~Ramond, and P.~Sikivie, ``A universal gauge theory model based on $E_6$,'' Phys.\ Lett.\ B \textbf{60}, 177--180 (1976).

\bibitem{Dixon1994}
G.~M.~Dixon, \emph{Division Algebras: Octonions, Quaternions, Complex Numbers and the Algebraic Design of Physics} (Kluwer Academic, Dordrecht, 1994).

\bibitem{GurseyTze}
F.~G\"ursey and C.-H.~Tze, \emph{On the Role of Division, Jordan and Related Algebras in Particle Physics} (World Scientific, Singapore, 1996).

\bibitem{Baez2002}
J.~C.~Baez, ``The octonions,'' Bull.\ Amer.\ Math.\ Soc.\ \textbf{39}, 145--205 (2002), arXiv:math/0105155 [math.RA].

\bibitem{Furey}
C.~Furey, ``Charge quantization from a number operator,'' Phys.\ Lett.\ B \textbf{742}, 195--199 (2015).

\bibitem{FureyThesis}
C.~Furey, ``Standard model physics from an algebra?,'' Ph.D.\ thesis, University of Cambridge (2015), arXiv:1611.09182 [hep-th].

\bibitem{Furey2018PLB}
C.~Furey, ``Three generations, two unbroken gauge symmetries, and one eight-dimensional algebra,'' Phys.\ Lett.\ B \textbf{785}, 84--89 (2018).

\bibitem{Furey2018EPJC}
C.~Furey, ``$SU(3)_c\times SU(2)_L\times U(1)_Y(\times U(1)_X)$ as a symmetry of division algebra ladder operators,'' Eur.\ Phys.\ J.\ C \textbf{78}, 375 (2018), arXiv:1806.00612 [hep-th].

\bibitem{FureyZ2}
N.~Furey, ``A superalgebra within: representations of lightest Standard-Model particles form a $\mathbb Z_2^5$-graded algebra,'' Ann.\ Phys.\ (Berlin) (2025), doi:\href{https://doi.org/10.1002/andp.202500229}{10.1002/andp.202500229}; arXiv:2505.07923 [hep-ph].

\bibitem{Stoica2018}
O.~C.~Stoica, ``Leptons, quarks, and gauge from the complex Clifford algebra $\mathbb{C}\ell_6$,'' Adv.\ Appl.\ Clifford Algebras \textbf{28}, 52 (2018), arXiv:1702.04336 [hep-th].

\bibitem{GillardGresnigt2019}
A.~B.~Gillard and N.~G.~Gresnigt, ``Three fermion generations with two unbroken gauge symmetries from the complex sedenions,'' Eur.\ Phys.\ J.\ C \textbf{79}, 446 (2019), arXiv:1904.03186 [hep-th].

\bibitem{DrayManogue1999}
T.~Dray and C.~A.~Manogue, ``The exceptional Jordan eigenvalue problem,'' Int.\ J.\ Theor.\ Phys.\ \textbf{38}, 2901--2916 (1999), arXiv:math-ph/9910004.

\bibitem{ManogueDray2010}
C.~A.~Manogue and T.~Dray, ``Octonions, $E_6$, and particle physics,'' J.\ Phys.\ Conf.\ Ser.\ \textbf{254}, 012005 (2010).

\bibitem{DuboisViolette2016}
M.~Dubois-Violette, ``Exceptional quantum geometry and particle physics,'' Nucl.\ Phys.\ B \textbf{912}, 426--449 (2016), arXiv:1604.01247 [math.QA].

\bibitem{TodorovDuboisViolette2018}
I.~Todorov and M.~Dubois-Violette, ``Deducing the symmetry of the standard model from the automorphism and structure groups of the exceptional Jordan algebra,'' Int.\ J.\ Mod.\ Phys.\ A \textbf{33}, 1850118 (2018), arXiv:1806.09450 [hep-th].

\bibitem{Boyle2020}
L.~Boyle, ``The Standard Model, the exceptional Jordan algebra, and triality,'' arXiv:2006.16265 [hep-th] (2020).

\bibitem{Lisi2007}
A.~G.~Lisi, ``An exceptionally simple theory of everything,'' arXiv:0711.0770 [hep-th] (2007).

\bibitem{DistlerGaribaldi}
J.~Distler and S.~Garibaldi, ``There is no `Theory of Everything' inside $E_8$,'' Commun.\ Math.\ Phys.\ \textbf{298}, 419--436 (2010), arXiv:0905.2658 [math.RT].

\bibitem{Adler2004}
S.~L.~Adler, \emph{Quantum Theory as an Emergent Phenomenon} (Cambridge University Press, Cambridge, 2004).

\bibitem{AdlerMillard1996}
S.~L.~Adler and A.~C.~Millard, ``Generalized quantum dynamics,'' Nucl.\ Phys.\ B \textbf{473}, 199--244 (1996).

\bibitem{ChamseddineConnes1997}
A.~H.~Chamseddine and A.~Connes, ``The spectral action principle,'' Commun.\ Math.\ Phys.\ \textbf{186}, 731--750 (1997), arXiv:hep-th/9606001.

\bibitem{WSI}
P.~S.~Wesley, T.~P.~Singh, and J.~M.~Isidro, ``Gravity and electroweak sector from symmetry breaking of an $so(3,3)$ BF theory,'' arXiv:2602.19151 [hep-th] (2026) (under review at \emph{Classical and Quantum Gravity}).

\bibitem{AchimanStech1978}
Y.~Achiman and B.~Stech, ``Quark-lepton symmetry and mass scales in an $E_6$ unified gauge model,'' Phys.\ Lett.\ B \textbf{77}, 389--393 (1978).

\bibitem{PatiSalam1974}
J.~C.~Pati and A.~Salam, ``Lepton number as the fourth color,'' Phys.\ Rev.\ D \textbf{10}, 275--289 (1974); Erratum ibid.\ \textbf{11}, 703 (1975).

\bibitem{MohapatraSenjanovic1980}
R.~N.~Mohapatra and G.~Senjanovi\'c, ``Neutrino mass and spontaneous parity nonconservation,'' Phys.\ Rev.\ Lett.\ \textbf{44}, 912--915 (1980).

\bibitem{RajSinghBosonic}
S.~Raj and T.~P.~Singh, ``A Lagrangian with $E_8\times E_8$ symmetry for the Standard Model and pre-gravitation I. --- The bosonic Lagrangian, and a theoretical derivation of the weak mixing angle,'' arXiv:2208.09811 [hep-ph].

\bibitem{ATLAS2012}
ATLAS Collaboration, G.~Aad \emph{et al.}, ``Observation of a new particle in the search for the Standard Model Higgs boson with the ATLAS detector at the LHC,'' Phys.\ Lett.\ B \textbf{716}, 1--29 (2012), arXiv:1207.7214 [hep-ex].

\bibitem{CMS2012}
CMS Collaboration, S.~Chatrchyan \emph{et al.}, ``Observation of a new boson at a mass of 125~GeV with the CMS experiment at the LHC,'' Phys.\ Lett.\ B \textbf{716}, 30--61 (2012), arXiv:1207.7235 [hep-ex].

\bibitem{SuperKpion}
Super-Kamiokande Collaboration, ``Search for proton decay via $p\to e^+\pi^0$ and $p\to\mu^+\pi^0$ in $0.31$ megaton-years exposure of the Super-Kamiokande water Cherenkov detector,'' Phys.\ Rev.\ D \textbf{102}, 112011 (2020).

\bibitem{Minkowski1977}
P.~Minkowski, ``$\mu\to e\gamma$ at a rate of one out of $10^9$ muon decays?,'' Phys.\ Lett.\ B \textbf{67}, 421--428 (1977).

\bibitem{Predictions2026}
T.~P.~Singh, ``Experimental predictions of the $E_8\times\omega E_8$ octonionic unification program: a falsification-oriented catalogue for quantum foundations, particle physics, gravitation, and cosmology,'' arXiv:2604.06288 [hep-ph] (2026).

\bibitem{NJL1961}
Y.~Nambu and G.~Jona-Lasinio, ``Dynamical model of elementary particles based on an analogy with superconductivity. I,'' Phys.\ Rev.\ \textbf{122}, 345--358 (1961).

\bibitem{BardeenHillLindner1990}
W.~A.~Bardeen, C.~T.~Hill, and M.~Lindner, ``Minimal dynamical symmetry breaking of the standard model,'' Phys.\ Rev.\ D \textbf{41}, 1647--1660 (1990).

\end{thebibliography}
\end{document}